\theoremstyle{plain}
\newtheorem*{theorem}{Theorem}
\newtheorem*{lemma}{Lemma}
\theoremstyle{remark}
\newtheorem*{question}{Question}
\begin{document}


\title{\#P- and $\oplus$P-completeness of counting roots\\ of a sparse polynomial}

\author{Alexey Milovanov\\
{\tt almas239@gmail.com}}

\maketitle

\begin{abstract}
We improve and simplify the result of the part 4 of ``Counting curves and their projections'' (Joachim von zur Gathen, Marek Karpinski, Igor Shparlinski, \cite{gks}) by showing that counting roots of a sparse polynomial over $\mathbb{F}_{2^n}$ is \#P- and $\oplus$P-complete under deterministic reductions.
\end{abstract}

\section{Result}

Consider the field $\mathbb{F}_{2^n}$. Its elements are presented as polynomials from $\mathbb{F}_2[x]$ modulo some irreducible polynomial of degree $n$. This polynomial can be found in time polynomial in $n$, as well as the matrix that related two representation corresponding to different irreducible polynomials~\cite{len}. Therefore, we do not need to specify a choice of the irreducible polynomial  speaking about polynomial reductions.

Consider the following counting problem (SparcePolynomialRoots): given $n$ and a polynomial from $\mathbb{F}_{2^n}[x]$, find the number of its roots in $\mathbb{F}_{2^n}$. The polynomial is given in a sparse representation, i.e., as a list of coefficients and degrees. The size of input is the total bit size of all this information (each coefficient takes $n$ bits).

\begin{theorem}
\label{main}
\textup{SparsePolynomialRoots}  is \#\textup{P}-complete and $\oplus$\textup{P}-complete.
\end{theorem}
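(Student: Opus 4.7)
Membership in $\#\textup{P}$ and $\oplus\textup{P}$ is immediate: for any candidate root $a$ in the underlying field, $P(a)$ is computable in polynomial time by fast modular exponentiation of each monomial. For hardness the plan is to reduce from $\#3\textup{SAT}$ and $\oplus 3\textup{SAT}$ via one polynomial-time construction: given a 3CNF formula $\phi$ with $n$ variables and $m$ clauses, produce a sparse polynomial $P(y)\in\mathbb{F}_{2^N}[y]$, with $N=n+m+1$, whose root count in $\mathbb{F}_{2^N}$ is exactly $|\textup{SAT}(\phi)|$. Being parsimonious, the same reduction witnesses both completeness statements.

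The first step is to replace the would-be product $\prod_j g_j$ of clause indicators by a linearised polynomial system. Write $g_j\in\mathbb{F}_2[x_1,\dots,x_n]$ for the degree-$3$ polynomial with $g_j(a)=1\Leftrightarrow a$ satisfies $C_j$, introduce fresh variables $z_0,z_1,\dots,z_m$, and consider the $m+2$ equations
\[
z_0+1=0,\qquad z_j+z_{j-1}g_j(x)=0\ (j=1,\dots,m),\qquad z_m+1=0.
\]
Each equation has $O(1)$ monomials and total degree at most $4$. An easy induction shows that for any $x\in\mathbb{F}_2^n$ the chain forces $z_j=\prod_{k\le j}g_k(x)$, so $z_m+1=0$ singles out precisely the satisfying assignments; moreover the compatible $z_j$'s are then uniquely determined (all equal to $1$). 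Hence the system has exactly $|\textup{SAT}(\phi)|$ solutions in $\mathbb{F}_2^{n+m+1}$.

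The second step packages this system into one univariate polynomial over $\mathbb{F}_{2^N}$. Fix an $\mathbb{F}_2$-basis $\epsilon_1,\dots,\epsilon_N$ of $\mathbb{F}_{2^N}$ with trace-dual basis $\epsilon_1^*,\dots,\epsilon_N^*$ (so $\textup{Tr}(\epsilon_i\epsilon_j^*)=\delta_{ij}$), and let $\alpha_0,\dots,\alpha_{m+1}$ be any $m+2$ of the $\epsilon_i$'s (possible since $N\ge m+2$). Form
\[
F(x,z):=\alpha_0(z_0+1)+\sum_{j=1}^{m}\alpha_j\bigl(z_j+z_{j-1}g_j(x)\bigr)+\alpha_{m+1}(z_m+1).
\]
Since each equation outputs a value in $\mathbb{F}_2\subset\mathbb{F}_{2^N}$, the $\mathbb{F}_2$-linear independence of the $\alpha_j$'s forces $F(x,z)=0$ to be equivalent to the simultaneous vanishing of all $m+2$ equations. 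Finally univariatise through the trace substitution $x_i\mapsto\textup{Tr}(\epsilon_i^*y)=\sum_{k=0}^{N-1}(\epsilon_i^*)^{2^k}y^{2^k}$ and $z_j\mapsto\textup{Tr}(\epsilon_{n+1+j}^*y)$; because $y\mapsto(\textup{Tr}(\epsilon_1^*y),\dots,\textup{Tr}(\epsilon_N^*y))$ is a bijection $\mathbb{F}_{2^N}\to\mathbb{F}_2^N=\mathbb{F}_2^{n+m+1}$, the resulting polynomial $P(y)\in\mathbb{F}_{2^N}[y]$ satisfies $|V(P)|=|\textup{SAT}(\phi)|$. Sparsity survives: each $g_j$ has $O(1)$ monomials, each trace contributes $N$ terms, so substitution blows up each equation to $O(N^4)$ monomials and $P$ has $O(mN^4)=\textup{poly}(n,m)$ monomials with coefficients and exponents of polynomial bit-size.

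The main conceptual step, and the expected main obstacle, is the linearisation: multiplying out all $m$ clause indicators $g_j$ would produce an exponentially dense polynomial, so one must introduce the chain of auxiliary variables $z_j$ to keep the system sparse while still capturing the whole conjunction. Once that is in hand, the $\mathbb{F}_2$-linear-independence combination with basis elements $\alpha_j$ serves as the deterministic substitute for the random linear combinations of the original randomised reduction of von zur Gathen, Karpinski and Shparlinski, and the trace substitution then transports everything from the multivariate to the univariate world without losing sparsity or parsimony.
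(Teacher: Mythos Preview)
Your proof is correct and parsimonious, but the step you flag as the ``main conceptual step'' is not actually needed, and the paper's argument is correspondingly shorter. The paper never considers multiplying the clause indicators together. Instead it writes each clause directly as an equation $P_k(x)=l_1l_2l_3=0$ (product of the negated literals, so $P_k=1+g_k$ in your notation), notes that each $P_k$ is $\{0,1\}$-valued on $\mathbb{F}_2^n$, and then applies the \emph{same} $\mathbb{F}_2$-linear-independence trick you use on your chain: $\sum_k\omega_kP_k(x)=0$ holds iff every $P_k(x)=0$. Thus the system of clause equations collapses to one equation without any auxiliary variables; your chain $z_0,\ldots,z_m$ is solving an obstacle that does not arise. (Where the paper does have to do a little work is when the number of clauses exceeds the number of variables, which it handles by padding with dummy two-variable clauses; your choice $N=n+m+1$ together with the $z_j$'s sidesteps this automatically, so your extra variables are not entirely wasted---they buy you parsimony without a case distinction.)

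The remaining difference is cosmetic. You extract the $\mathbb{F}_2$-coordinates of $y$ via the trace and a dual basis, $y_i=\operatorname{Tr}(\epsilon_i^*y)$; the paper builds the same $\mathbb{F}_2$-linear functionals $f_i$ as normalised Moore determinants, invoking the lemma that $\det(\alpha_i^{2^{j-1}})\neq0$ for $\mathbb{F}_2$-independent $\alpha_i$. Both constructions yield sparse linearised polynomials with $N$ terms and are computable in deterministic polynomial time, so either one works.
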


In the paper mentioned above~\cite{gks} the authors provide a randomized polynomial reduction of some \#P-complete problem to the problem of counting points on a curve. We improve this result by (1)~providing a deterministic reduction (proving \#P-completeness and $\oplus$P-completeness with respect to deterministic reductions) and (2)~replacing polynomials of two variables by univariate polynomials (this implies the result for curves by adding a dummy variable).

\section{Proof}
We use  \#3SAT (counting the number of satisfying assignments for a 3-CNF) as a standard \#P-complete problem. Consider some 3-CNF $S$. Each clause in $S$ can be converted into a polynomial equation of the form $l_1\cdot l_2 \cdot l_3=0$, where every $l_i$ is a literal ($x_i$ or $1 + x_i$). All variables are elements of $\mathbb{F}_2$ (i.e., bits). We need to reduce this system of polynomial equations to one polynomial equation over $\mathbb{F}_{2^n}$.

Consider a basis $\omega_1, \ldots, \omega_n$ of  $\mathbb{F}_{2^n}$ over $\mathbb{F}_2$. Then  every $x \in \mathbb{F}_{2^n}$ can be represented as
   $$
x = x_1 \omega_1+ \ldots +x_n \omega_n, 
   $$
where $x_i \in \mathbb{F}_2$. First we transform the clauses (conditions on $x_1,\ldots,x_n$) into (sparse) polynomial conditions on $x$, and then show how the resulting system of polynomial equations can be replaced by one equation. 

Every equation in $S$ has the form $l_1\cdot l_2 \cdot l_3=0$, where $l_i$ are literals, so we need to find polynomials $f_i$ such that $f_i(x)=x_i$. In other terms, all $x$ whose $i$th coordinate $x_i$ is zero should be roots of $f_i$, and $f_i$ should be equal to $1$ on the other half of the field (where $x_i=1$). It is enough for our first step, since a product of three polynomials in  sparse representation is again a polynomial in sparse representation whose size is only polynomially bigger.  The following lemma~\cite[Lemma 3.51]{ln} helps.

\begin{lemma}
Assume that $\alpha_1,\ldots,\alpha_k$ are elements of $\mathbb{F}_{2^n}$ that are linearly independent over $\mathbb{F}_2$. Then the determinant
$$\begin{vmatrix}
\alpha_1 & \alpha_1^2  & \alpha_{1}^4 & \dots & \alpha^{2^{k-1}}\\
\alpha_2 & \alpha_2^2 & \alpha_{2}^{4} & \dots & \omega_{2}^{2^{k-1}} \\ 
\hdotsfor{5} \\
\alpha_n & \alpha_{n}^2 & \alpha_{n}^{4} & \dots & \alpha_{k}^{2^{k-1}}
\end{vmatrix} $$
is a non-zero element of $\mathbb{F}_{2^n}$.
\end{lemma}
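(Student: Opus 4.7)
The plan is to prove the contrapositive: if the Moore-type determinant vanishes, then $\alpha_1,\ldots,\alpha_k$ are $\mathbb{F}_2$-linearly dependent. A vanishing determinant means the columns are linearly dependent over $\mathbb{F}_{2^n}$, so there exist $d_1,\ldots,d_k\in\mathbb{F}_{2^n}$, not all zero, such that $\sum_{j=1}^{k} d_j\,\alpha_i^{2^{j-1}} = 0$ for every $i=1,\ldots,k$. Repackage this as the assertion that the single univariate polynomial
$$L(x) \;=\; \sum_{j=1}^{k} d_j\, x^{2^{j-1}} \;\in\; \mathbb{F}_{2^n}[x]$$
has each $\alpha_i$ as a root.

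The key observation is that $L$ is an $\mathbb{F}_2$-linearized (additive) polynomial: because the characteristic is $2$, squaring is additive, so $(x+y)^{2^{j-1}} = x^{2^{j-1}} + y^{2^{j-1}}$, and for $c\in\mathbb{F}_2$ we have $c^{2^{j-1}}=c$. Consequently $L(cx+c'y) = cL(x)+c'L(y)$ for all $c,c'\in\mathbb{F}_2$, so the set of roots of $L$ in $\overline{\mathbb{F}_2}$ forms an $\mathbb{F}_2$-subspace.

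Now I would conclude by a degree count. Since $\alpha_1,\ldots,\alpha_k$ all lie in the root subspace and are assumed $\mathbb{F}_2$-linearly independent, every one of the $2^k$ distinct $\mathbb{F}_2$-linear combinations $\sum c_i\alpha_i$ is also a root of $L$. But $\deg L \le 2^{k-1} < 2^k$, so $L$ would have more roots than its degree permits, forcing $L\equiv 0$, i.e.\ all $d_j=0$, contradicting the choice of the $d_j$. Hence the determinant is nonzero.

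The only subtle point is the linearization step: one must remember the characteristic-$2$ identities that make $L$ additive and $\mathbb{F}_2$-linear. Everything else — reducing the determinant condition to a column dependence, interpreting that dependence as a polynomial vanishing at the $\alpha_i$, and comparing the size of an $\mathbb{F}_2$-span to the degree — is routine. I expect no serious obstacle; the whole argument fits in a few lines once the linearized-polynomial viewpoint is adopted.
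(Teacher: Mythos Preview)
Your argument is correct. Both proofs exploit the same underlying fact---that a nonzero $\mathbb{F}_2$-linearized polynomial $\sum_{j} c_j x^{2^{j-1}}$ of degree at most $2^{k-1}$ can vanish on an $\mathbb{F}_2$-subspace of dimension at most $k-1$---but they package it differently. The paper expands the determinant along the first \emph{row}, obtaining a linearized polynomial $P(x)$ in the variable $\alpha_1$; it then needs induction on $k$ to certify that the leading coefficient of $P$ (itself a Moore determinant of size $k-1$) is nonzero, so that $P$ has exactly the $2^{k-1}$ roots spanned by $\alpha_2,\ldots,\alpha_k$ and hence $P(\alpha_1)\ne 0$. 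You instead read a zero determinant as a \emph{column} dependence, which directly hands you a single nonzero linearized polynomial $L$ vanishing at \emph{all} of $\alpha_1,\ldots,\alpha_k$; the degree bound then gives an immediate contradiction without any inductive step. Your route is a little slicker in that it avoids induction entirely; the paper's route has the minor advantage of identifying the roots of the cofactor polynomial explicitly, which it later reuses when constructing the coordinate projections $f_i$.
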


\begin{proof}[Proof of the lemma]
Consider this determinant as a function of $\alpha_1$ when other $\alpha_i$ are fixed. In other words, consider the polynomial $P(x)$ that is obtained if we replace $\alpha_1$ by $x$ everywhere in the fist row. We get a polynomial of degree (at most) $2^{k-1}$. The powers of $x$ appearing in $P$ are $1,2,4,\ldots,2^{k-1}$, so this polynomial is linear over $\mathbb{F}_2$ (recall that $(a+b)^2=a^2+b^2$ over a field of characteristic $2$). It has roots $\alpha_2,\ldots,\alpha_k$ (two equal rows guarantee the zero determinant); all $2^{k-1}$ linear combinations of $\alpha_2,\ldots,\alpha_k$ are also roots due to linearity. Reasoning by induction, we may assume that the leading coefficient of $P$, begin the determinant of the same type for smaller $k$, is not zero. Then we know that $P$ has no other roots, and $P(\alpha_1)\ne 0$.
\end{proof}

Now we can define the polynomial
$$ f_1(x) :=c\begin{vmatrix}
x & x^2  & x^4 & \dots & x^{2^{n-1}}\\
\omega_2 & \omega_2^2 & \omega_{2}^{4} & \dots & \omega_{2}^{2^{n-1}} \\ 
\omega_3 & \omega_3^2 & \omega_{3}^{4} & \dots & \omega_{3}^{2^{n-1}} \\ 
\hdotsfor{5} \\
\omega_n & \omega_{n}^2 & \omega_{n}^{4} & \dots & \omega_{n}^{2^{n-1}}
\end{vmatrix} $$
for suitable $c\ne 0$. We know (see the proof of the lemma) that $f_1$ equals $0$ on the linear combinations of $\omega_2,\ldots,\omega_n$, i.e., on all elements with $x_1=0$. Lemma says that $f_1(\omega_1)\ne0$, and the linearity guarantees that $f_1$ has the same values on all elements $x$ with $x_1=1$. It remains to choose $c$ to make $f(\omega_1)$ equal to $1$.

Let us return to our goal: we know now that the number of satisfying assignments for $S$ in $\mathbb{F}_2^n$ is equal to the number of solutions of the system of polynomial equations $P_1(x)=0,P_2(x)=0,\ldots$ in $\mathbb{F}_{2^n}$; each $P_k$ is a product of three polynomials chosen among $f_i$ and $1+f_i$. The number of equations equals the number of clauses. Assume for a while that it is at most $n$ (does not exceed the number of variables). Then we can replace the system by one equation
$$
P_1(x)\omega_1+P_2(x)\omega_2+\ldots = 0
$$
in $\mathbb{F}_{2^n}$ using the fact that polynomials $P_i$ may only have values $0$ and $1$ (being a product of three polynomials with this property).

This finishes the proof of the theorem for the case when the number of variables does not exceed the number of clauses. The general case can be reduced to this special case by adding dummy variables $y_1,\ldots,y_{2s}$ and ``clauses'' $y_1\land y_2=1$, $y_3\land y_4=1$, etc. There are two variables per ``clause'', so this helps. Note also that these ``clauses'' also can be transformed into polynomial equations in the same ways as real clauses (they have conjunction instead of disjunction and $1$ instead of $0$, but this does not matter).

This finishes the proof of our main result.

\section{Remarks and open questions}

We consider sparse polynomials of exponentially large degree. What if we require the degree to be polynomially bounded, in other words, represent the polynomial as an array of coefficients? The question may be asked for polynomials of two variables and corresponding curves.

\begin{question}

Is the problem of finding the number of points on a curve of polynomial-bounded degree \#P-complete? 

Is it $\oplus$P-complete?

Does it belong to polynomial hierarchy?

Is it AM-simple?

\end{question}

May be results of Algebraic Geometry like Fulton's Trace Formula

(\url{http://math.stanford.edu/~dlitt/exposnotes/fultontrace.pdf}) 
 
could help to answer positively the last two questions.

\section*{Acknowlegments}
I would like to thank Alexander Shen for help in writing this paper.

\end{document}